\newcommand{\shortversion}[1]{}
\newcommand{\longversion}[1]{#1}

\shortversion{
\documentclass{llncs} 
}

\longversion{
\documentclass[10pt,usletter]{article}
\usepackage[totalwidth=450pt,totalheight=640pt]{geometry}
\date{}
\usepackage{amsthm}
}

\usepackage{times}

\shortversion{
\title{Upper and Lower Bounds for\\ Weak Backdoor Set
  Detection\thanks{All authors acknowledge support from the OeAD/DST
    (Austrian Indian collaboration grant, IN13/2011). Szeider
    acknowledges the support by the ERC, grant reference 239962.}}
\author{Neeldhara Misra\inst{1} \and Sebastian Ordyniak\inst{2} \and Venkatesh Raman\inst{3} \and Stefan Szeider\inst{4}}

\institute{
  Indian Institute of Science, Bangalore
  \and
  Masaryk University Brno
  \and
  Institute of Mathematical Sciences, Chennai
  \and
  Vienna University of Technology
}
}  

\longversion{
\title{Upper and Lower Bounds for Weak Backdoor Set Detection\thanks{All authors acknowledge support from the OeAD/DST
    (Austrian Indian collaboration grant, IN13/2011). Szeider
    acknowledges the support by the ERC, grant reference 239962.}}
\author{Neeldhara  Misra${}^1$ \quad Sebastian Ordyniak${}^2$ \quad
  Venkatesh Raman${}^3$ \quad Stefan Szeider${}^4$\\[8pt]
\small  ${}^1$ Department of Computer Science and Automation,
  Indian Institute of Science, Bangalore\\[-3pt]
\small  \texttt{mail@neeldhara.com}\\
\small ${}^2$  Faculty of Informatics,
  Masaryk University Brno,
 Czech Republic\\[-3pt]
\small  \texttt{ordyniak@fi.muni.cz}\\
\small ${}^3$  The Institute of Mathematical Sciences, Chennai\\[-3pt]
\small  \texttt{vraman@imsc.res.in}\\
\small ${}^4$ Institute of Information Systems,
  Vienna University of Technology, Austria\\[-3pt]
\small  \texttt{stefan@szeider.net}
}

}

\usepackage{url} \urlstyle{rm}

\usepackage{amsmath}
\usepackage{amssymb}
\usepackage{booktabs}
\usepackage{tikz}
\usepackage{multirow}
\usepackage{enumitem}

\newcommand{\AAA}{\mathcal{A}}
\newcommand{\BBB}{\mathcal{B}}

\newcommand{\SSS}{\mathcal{S}}

\newcommand{\class}[1]{\text{\text{\normalfont\sc  #1}}}

\newcommand{\HORN}{\class{Horn}}
\newcommand{\ZEROV}{\class{$0$-Val}}

\newcommand{\RHORN}{\class{RHorn}}
\newcommand{\QHORN}{\class{QHorn}}
\newcommand{\MATCH}{\class{Match}}
\newcommand{\KROM}{\class{Krom}}
\newcommand{\CNF}{\class{CNF}}
\newcommand{\THREECNF}{\class{3CNF}}

\newcommand{\CHAINS}{\class{Chains}}
\newcommand{\ACYC}{\class{Forest}}

\newcommand{\hy}{\hbox{-}\nobreak\hskip0pt}
\newcommand{\brk}{\hspace{0pt}}

\newcommand{\SB}{\{\,}%
\newcommand{\SM}{\;{:}\;}%
\newcommand{\SE}{\,\}}%
\newcommand{\Card}[1]{|#1|}

\newcommand{\mtext}[1]{\text{\normalfont #1}}
\newcommand{\var}{\mtext{var}}

\newcommand{\FPT}{\text{\normalfont FPT}}

\newcommand{\W}[1][xxxx]{\text{\normalfont W}[#1]}

\newcommand{\ol}[1]{\overline{#1}}

\newtheorem{LEM}{Lemma} 
\newtheorem{THE}{Theorem} 
 
\newtheorem{COR}{Corollary} 
\newtheorem{CLM}{Claim}

\shortversion{
 
 \renewenvironment{proof}{\begin{pf}}{\qed\end{pf}}
 \newenvironment{proofnoqed}{\begin{pf}}{\end{pf}}
}

\let\phi=\varphi
\nonfrenchspacing

\renewcommand{\phi}{\varphi}

\newcommand{\scite}[1]{${}^\text{{\protect\cite{#1}}}$}
\newcommand{\we}{${}^{[\star]}$}
\newcommand{\triv}{${}^{[\text{triv}]}$}
\begin{document}

\maketitle

\begin{abstract}\noindent
  We obtain upper and lower bounds for running times of exponential
  time algorithms for the detection of weak backdoor sets of 3CNF
  formulas, considering various base classes. These results include
  (omitting polynomial factors),
  (i)~a $4.54^k$ algorithm to detect whether there is a weak backdoor
  set of at most $k$ variables into the class of Horn formulas;
  (ii)~a $2.27^k$ algorithm to detect whether there is a weak
    backdoor set of at most $k$ variables into the class of Krom formulas.
These bounds improve an earlier known bound of $6^k$.  We
  also prove a $2^k$ lower bound for these problems, subject to the
  Strong Exponential Time Hypothesis.
\end{abstract}

\section{Introduction}
\longversion{\thispagestyle{empty}}
A backdoor set is a set of variables of a CNF formula such that fixing
the truth values of the variables in the backdoor set moves the
formula into some polynomial-time decidable class.  Backdoor sets
were independently introduced by Crama et
al.~\cite{CramaEkinHammer97} and by Williams et
al.~\cite{WilliamsGomesSelman03}, the latter authors coined the term
``backdoor.'' The existence of a small backdoor set in a CNF formula
can be considered as an indication of ``hidden structure'' in the
formula. 

One distinguishes between various types of backdoor sets. Let $\BBB$
denote the base class of formulas under consideration.  A \emph{weak}
\emph{$\BBB$-backdoor set} of a CNF formula $F$ is a set $S$ of
variables such that there is a truth assignment~$\tau$ of the variables in $S$ 
for which the formula $F[\tau]$, which is obtained from $F$ 
by assigning the variables of $S$ according to $\tau$ and applying the usual simplifications,
 is satisfiable and $F[\tau]\in \BBB$.  A \emph{strong} \emph{$\BBB$-backdoor
  set} of $F$ is a set~$S$ of variables such that for {\em each} truth
assignment $\tau$ of the variables in $S$, the formula $F[\tau]$ is in
$\BBB$.
 
The challenging problem is to find a weak or strong $\BBB$\hy backdoor
set of size at most~$k$ if it exists.  These problems are NP-hard for
all reasonable base classes. However, if~$k$ is assumed to be small,
an interesting complexity landscape evolves, which can  be adequately
analyzed in the context of \emph{parameterized complexity}, where~$k$
is considered as the parameter (some basic notions of parameterized
complexity will be reviewed in Section~\ref{section:prelim}).  This
line of research was initiated by Nishimura~et
al.~\cite{NishimuraRagdeSzeider04-informal} who showed that for the
fundamental base classes $\HORN$ and $\KROM$, the detection of strong
backdoor sets is fixed-parameter tractable, whereas the detection of
weak backdoor sets is not (under the complexity theoretic assumption
$\FPT\neq\W[1]$). However, if the width of the clauses of the input
formula is bounded by a constant, then these hardness results do not
hold any more and one achieves fixed-parameter
tractability~\cite{GaspersSzeider12}. In order to discuss these
results, we introduce the following problem template which is defined
for any two classes $\AAA,\BBB$ of CNF formulas.
\newcommand{\WB}{\text{\normalfont WB}}
\begin{quote}
  $\WB(\AAA,\BBB)$\nopagebreak
  
  \emph{Instance:} A CNF formula $F\in \AAA$ with $n$ variables, a
  non-negative integer~$k$.\nopagebreak

  \emph{Parameter:} The integer $k$.\nopagebreak

  \emph{Question:} Does $F$ have a weak $\BBB$\hy backdoor set of size at
  most $k$?
\end{quote}
Thus, one could think of this problem as asking for a small weak
backdoor ``from $\AAA$ to $\BBB$.'' In this paper we focus on the
special case where $\AAA=\THREECNF$. In particular, we aim to draw
a detailed complexity landscape of $\WB(\THREECNF,\BBB)$ for various
base classes, providing improved lower and upper bounds. An overview
of our results in the context of known results is provided in 
Table~\ref{tab:results}. The
definitions of these classes appear in Section~\ref{section:prelim}.
\begin{table}[tb]
  \centering 
\begin{tabular}{@{}l@{~~~~}c@{~~~}c@{~~~}c@{~~~}c@{~~~}c@{~~~~~}c@{~~~}c@{}}
\toprule
$\BBB$:   & $\HORN$ & $\KROM$ &  $\ZEROV$ & $\ACYC$ & $\RHORN$ & $\QHORN$ & $\MATCH$ \\
\midrule 
UB: &  
$4.54^k$ \we &
$2.27^k$ \we &
$2.85^k$ \scite{RamanShankar13} &
$f(k)$ \scite{GaspersSzeider12b} &
$n^k$ \triv & 
$n^k$ \triv & 
$n^k$ \triv \\  
LB: &
$2^k$ \we &
$2^k$ \we &
$2^{o(k)}$ \we &
$2^k$ \we &
$n^{\frac{k}{2}-\epsilon}$ \scite{GaspersSzeider12} &
$n^{\frac{k}{2}-\epsilon}$ \scite{GaspersEtal13} &
$n^{\frac{k}{2}-\epsilon}$ \we \\
\bottomrule \\
\end{tabular}
\caption{Upper bounds (UB) and lower bounds (LB) for the time complexity of $\WB(\THREECNF,\BBB)$ for various base
  classes $\BBB$ (polynomial
  factors are omitted). The $2^k$ and $n^{\frac{k}{2}-\epsilon}$ 
  lower bounds are subject to the Strong
  Exponential-Time Hypothesis, and the $2^{o(k)}$ lower bounds are subject to the 
  Exponential-Time Hypothesis.
  Results marked~\we\ are obtained in this paper.
} 
  \label{tab:results}
\end{table}

\sloppypar Gaspers and Szeider \cite{GaspersSzeider12} showed that
$\WB(\THREECNF,\BBB)$ is fixed-parameter tractable for every base
class $\BBB$ which is defined by a property of individual clauses,
such as the classes $\HORN$, $\KROM$, and $\ZEROV$. Their general
algorithm provides a running time of~$6^k$ (omitting polynomial
factors).  We improve this to $4.54^k$ for $\HORN$ and to $2.27^k$ for
$\KROM$. These results fit nicely with the recent $2.85^k$ algorithm
for $\WB(\THREECNF,\ZEROV)$ by Raman and
Shankar~\cite{RamanShankar13}.

There are base classes for which the detection of weak backdoor sets
remains fixed-parameter \emph{in}tractable (in
terms of $\W[2]$\hy hardness), even if the input is
restricted to 3CNF.  In particular, the $\W[2]$-hardness of $\WB(\THREECNF,\BBB)$ is known for
the base class \RHORN{}~\cite{GaspersSzeider12}
and for the base class of \QHORN{}~\cite{GaspersEtal13}:
We extend this line of results with another example. We consider the class
$\MATCH$ of \emph{matched} CNF formulas~\cite{FrancoVanGelder03},
which are CNF formulas~$F$ where for each clause $C\in F$ one can
select a unique variable $x_C$ that appears in $C$ positively or
negatively, such that $x_C\neq x_D$ for $C\neq D$. Since all matched
formulas are satisfiable, this class is particularly well suited as
a base class for weak backdoor sets. It is known that
$\WB(\CNF,\MATCH)$ is $\W[2]$\hy hard, but the case
$\WB(\THREECNF,\MATCH)$ has been open.  We show, that
$\WB(\THREECNF,\MATCH)$ is $\W[2]$\hy hard as well.

We contrast the algorithmic upper bounds for the considered backdoor set
detection problems by lower bounds. These lower bounds are either
subject to the Exponential Time Hypothesis (ETH), or the Strong
Exponential Time Hypothesis (SETH), see Section~\ref{sec:lb}.
Consequently, any algorithm that beats these lower bounds would provide an
unexpected speedup for the exact solution of 3SAT or SAT,
respectively.  In particular, we explain how the $\W[2]$\hy hardness
proofs can be used to get lower bounds of the form
$n^{\frac{k}{2}-\epsilon}$ under the SETH.

\shortversion{\paragraph{Full Version} Proofs of statements
  marked with ($\star$) are shortened or omitted due to space restrictions.
Detailed proofs can be found in the full
  version, available at \url{arxiv.org/abs/1304.5518}.}

\section{Preliminaries}\label{section:prelim}

\paragraph{CNF Formulas and Assignments}
We consider propositional formulas in conjunctive normal form (CNF) as sets
of clauses, where each clause is a set of literals, i.e., a literal is either 
a (positive) variable or a negated variable, not containing a
pair of complementary literals. We say that a variable $x$ is positive
(negative) in a clause $C$ if $x\in C$ ($\ol{x}\in C$), and we write
$\var(C)$ for the set of variables that are positive or negative in $C$.  A
\emph{truth assignment} $\tau$ is a mapping from a set of
variables, denoted by $\var(\tau)$, to $\{0,1\}$.  A truth assignment $\tau$ \emph{satisfies} a
clause $C$ if it sets at least one positive variable of $C$ to $1$
or at least one negative variable of $C$ to $0$. A truth assignment
$\tau$ satisfies a CNF formula~$F$ if it satisfies all clauses
of~$F$. Given a CNF formula $F$ and a truth assignment~$\tau$,
$F[\tau]$ denotes the \emph{truth assignment reduct} of $F$ under
$\tau$, which is the CNF formula obtained from~$F$ by first removing
all clauses that are satisfied by $\tau$ and then removing from the
remaining clauses all literals $x,\ol{x}$ with $x\in \var(\tau)$. Note that no
assignment satisfies the empty clause.
The \emph{incidence graph} of a CNF formula
$F$ is the
bipartite graph whose vertices are the variables and clauses of $F$,
and where a variable $x$ and a clause $C$ are adjacent if and only
if $x\in \var(C)$.

\medskip \noindent We consider the following \emph{classes of CNF formulas}.
\begin{itemize}
\item $\THREECNF$: the class of CNF formulas where each clause contains at
  most $3$ literals. 
\item $\KROM$: the class of CNF formulas where each clause contains at
  most $2$ literals (also called 2CNF). 
\item $\HORN$: the class of \emph{Horn} formulas, i.e., CNF formulas
  where each clause has at most $1$ positive literal.
\item $\RHORN$: the class of \emph{renameable} (or \emph{disguised})
  \emph{Horn} formulas, i.e., formulas that can be made Horn by
complementing variables.
\item $\QHORN$: the class of \emph{q-Horn}
    formulas~\cite{BorosHammerSun94} ($\RHORN,\KROM \subseteq
  \QHORN$).
\item $\ZEROV$: the class of \emph{$0$\hy valid}  CNF formulas, i.e.,
  formulas where
  each clause contains at least $1$ negative literal.
\item $\ACYC$: the class of \emph{acyclic} formulas (the undirected incidence graph
 is acyclic).
\item $\MATCH$: the class of \emph{matched} formulas, formulas whose
  incidence graph has a matching such that each clause is matched
  to some unique variable.
\end{itemize}
All our results concerning the classes $\HORN$ and $\ZEROV$ clearly
hold also for  the dual classes of \emph{anti-Horn} formulas (i.e., CNF formulas
where each clause has at most $1$ positive literal), and \emph{$1$\hy
  valid} CNF formulas (i.e., formulas where each clause contains at
least $1$ positive literal), respectively.

\paragraph{Parameterized Complexity}

Here we introduce the relevant concepts of parameterized complexity theory.
For more details, we refer to text books on the topic~\cite{DowneyFellows99,FlumGrohe06,Niedermeier06}.
An instance of a parameterized problem is a pair $(I,k)$ where $I$ is
the main part of the instance, and $k$ is the parameter.  A
parameterized problem is \emph{fixed-parameter tractable} if instances
$(I,k)$ can be solved in time $f(k)\Card{I}^c$, where $f$ is a
computable function of $k$, and $c$ is a constant.  $\FPT$ denotes the
class of all fixed-parameter tractable problems.  Hardness for
parameterized complexity classes is based on \emph{fpt-reductions}.  A
parameterized problem $L$ is fpt-reducible to another parameterized
problem~$L'$ if there is a mapping $R$ from instances of $L$ to
instances of $L'$ such that (i) $(I,k) \in L$ if and only if $(I',k')
= R(I,k) \in L'$, (ii) $k' \leq g(k)$ for a computable function $g$,
and (iii) $R$ can be computed in time $O(f(k)\Card{I}^c)$ for a
computable function $f$ and a constant~$c$.  
Central to the
completeness theory of parameterized complexity is the hierarchy $\FPT
\subseteq \W[1] \subseteq \W[2] \subseteq \dots $.  Each
intractability class $\W[t]$ contains all parameterized problems that
can be reduced to a certain parameterized satisfiability problem under
fpt-reductions.

\longversion{
The following problem is well-known to be $\W[2]$\hy complete.
\begin{quote}
  \textsc{Hitting Set}
  
  \emph{Instance:} A family $\SSS$ of finite sets $S_1,\dotso,S_m$ and an integer $k > 0$.
  
  \emph{Parameter:} The integer $k$.  
  
  \emph{Question:} Does $\SSS$ have a hitting set of size at most
  $k$, i.e., a set $H \subseteq \bigcup_{1 \leq i \leq m}S_i$ such
  that $H \cap S_i\neq \emptyset$ for every $1 \leq i \leq m$ and
  $|H|\leq k$?
\end{quote}} \shortversion{ The problem \textsc{Hitting Set} is
well-known to be $\W[2]$\hy complete.  This problem takes as input a
family $\SSS$ of finite sets $S_1,\dotso,S_m$ and an integer $k > 0$,
$k$ is the parameter.  The question is whether $\SSS$ has a hitting
set of size at most $k$, i.e., a set $H \subseteq \bigcup_{1 \leq i
  \leq m}S_i$ such that $H \cap S_i\neq \emptyset$ for every $1 \leq i
\leq m$ and $|H|\leq k$?
}

However, the restricted variant where all sets $S_i$ are of size at
most $3$, is fixed-parameter tractable and can be solved in time
$2.270^k$, omitting polynomial
factors~\cite{NiedermeierRossmanith03}.

\newpage 
  
\section{Upper Bounds}

\begin{THE}
   $\WB(\THREECNF,\KROM)$ can be solved in time $2.270^k$ (omitting
   polynomial factors).
\end{THE}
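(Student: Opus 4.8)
The plan is to decouple the two requirements in the definition of a weak $\KROM$-backdoor set, exploiting a feature special to the base class $\KROM$: whether $F[\tau]$ lies in $\KROM$ depends only on the \emph{set} $S=\var(\tau)$, not on the truth values $\tau$ assigns. Concretely, I would first prove that for a $3$CNF formula $F$, any set $S$ of variables, and every $\tau$ with $\var(\tau)=S$, one has $F[\tau]\in\KROM$ if and only if $S$ intersects $\var(C)$ for every clause $C$ of $F$ with $|C|=3$. One direction is immediate: a $3$-clause disjoint from $S$ is neither satisfied by $\tau$ nor shortened, so it survives in $F[\tau]$ with three literals. For the other, a $3$-clause hit by $S$ is either satisfied by $\tau$ (and deleted) or loses at least one literal, while every clause of size $\le 2$ stays of size $\le 2$; hence $F[\tau]$ is a $2$CNF formula. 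Writing $H_F$ for the hypergraph on $\var(F)$ with edge set $\{\var(C) : C\in F,\ |C|=3\}$, it follows that the weak $\KROM$-backdoor sets of $F$ are precisely the hitting sets $S$ of $H_F$ for which $F[\tau]$ is satisfiable for some $\tau$ with $\var(\tau)=S$.

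Next I would peel off the satisfiability requirement. Since, for any fixed variable set $S$, $F$ is satisfiable if and only if $F[\tau]$ is satisfiable for some $\tau$ with $\var(\tau)=S$ (restrict a model of $F$ to $S$, and conversely glue), the characterization above collapses to: $F$ has a weak $\KROM$-backdoor set of size at most $k$ iff $F$ is satisfiable and $H_F$ has a hitting set of size at most $k$ — and in that case every size-$\le k$ hitting set of $H_F$ is such a backdoor set. This yields the algorithm. Construct $H_F$ and run the $2.270^k\cdot n^{O(1)}$ algorithm of Niedermeier and Rossmanith~\cite{NiedermeierRossmanith03} for \textsc{Hitting Set} restricted to sets of size at most $3$; if it reports that $H_F$ has no hitting set of size $\le k$, answer \NO. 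Otherwise it returns such a set $S$, and it remains only to decide whether $F$ is satisfiable; since $F[\tau]$ is a $2$CNF formula for every $\tau$ with $\var(\tau)=S$, we may enumerate all $2^{|S|}\le 2^k$ such $\tau$, test each $F[\tau]$ for satisfiability in polynomial time, and answer \YES{} iff one test succeeds. Correctness follows from the characterization, and the total time is $2.270^k\cdot n^{O(1)}+2^k\cdot n^{O(1)}=2.270^k\cdot n^{O(1)}$.

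The substantive step is the structural claim of the first paragraph; the rest is bookkeeping, and the only point requiring a little care is degenerate clauses — in particular the empty clause, which lies in $\KROM$ but is unsatisfiable, so it must be excluded by the satisfiability test rather than the hitting-set test (which is how the argument is arranged). It is worth noting that this decoupling is exactly what fails for $\HORN$: whether a $3$-clause with two positive literals remains Horn after reduction depends on \emph{how} the chosen variables are set, not merely on which ones are chosen, which is why the analogous bound for $\WB(\THREECNF,\HORN)$ carries an extra factor of roughly $2^k$.
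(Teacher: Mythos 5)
Your proposal is correct and follows essentially the same route as the paper: reduce to a \textsc{Hitting Set} instance on the variable sets of the $3$-clauses, solve it with the $2.270^k$ algorithm of Niedermeier and Rossmanith, and then try all $2^k$ assignments of the hitting set, testing each resulting $2$CNF formula for satisfiability in polynomial time. Your write-up merely makes more explicit the (correct) justification the paper states in one line, namely that any hitting set works because $F$ is satisfiable iff $F[\tau]$ is satisfiable for some assignment $\tau$ of the chosen set.
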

\begin{proof}
  Let $F$ and $k$ be the given $\THREECNF$ formula and non-negative
  integer, respectively.  Let $\SSS$ be the family of sets $\SB
  \var(C) \SM C \in F, |C|=3 \SE$. 
  We can
  find a weak $\KROM$\hy backdoor set of size at most~$k$ by finding a
  hitting set $H$ of $\SSS$ of size at most~$k$ and checking whether
  there is an assignment $\tau_H$ to the variables in $H$ such that
  $F[\tau_H]$ is satisfiable.  The correctness follows from the fact
  that if $F[\tau_H]$ is satisfiable for some $\tau_H$, then we clearly
  have the desired backdoor set. On the other hand, if $F[\tau_H]$ is
  not satisfiable for {\em any} $\tau_H$, then $F$ was not satisfiable
  to begin with, and does not admit a weak backdoor set of any size.
  As $F[\tau_H] \in \KROM$, the satisfiability of $F[\tau_H]$ can be
  checked in polynomial-time. It follows that if we omit polynomial
  factors then the running time of this algorithm is the time required
  to find a hitting set of $\SSS$ of size at most $k$, i.e.,
  $2.270^k$~\cite{NiedermeierRossmanith03}, plus the time required to
  go over the at most~$2^k$ assignments of the variables in the
  hitting set.
\end{proof}

\shortversion{
\begin{THE}[$\star$]\label{the:UB-HORN}
   $\WB(\THREECNF,\HORN)$ can be solved in time 
   $(\frac{1}{2}(1+\sqrt{65}))^k < 4.54^k$ (omitting
   polynomial factors).
\end{THE}
\begin{proofnoqed}
}
\longversion{
\begin{THE}\label{the:UB-HORN}
   $\WB(\THREECNF,\HORN)$ can be solved in time 
   $(\frac{1}{2}(1+\sqrt{65}))^k < 4.54^k$ (omitting
   polynomial factors).
\end{THE}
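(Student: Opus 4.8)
The plan is to follow the scheme of the previous theorem — generate candidate backdoor sets by branching, then for each candidate decide whether some assignment of it makes the reduct both $\HORN$ and satisfiable — but with a branching rule adapted to $\HORN$. Call a clause of a $\THREECNF$ formula $F$ \emph{bad} if it has at least two positive literals, and for a bad clause $C$ let $P_C$ be the set of variables occurring positively in $C$, so $|P_C|\in\{2,3\}$. A pair $(S,\tau)$ with $\tau$ a truth assignment of $S$ witnesses a weak $\HORN$-backdoor set precisely when (i)~every bad clause $C$ is satisfied by $\tau$ or satisfies $|P_C\setminus S|\le 1$, and (ii)~$F[\tau]$ is satisfiable. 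Hence a solution must, for each bad clause $C$, either place some variable of $C$ into $S$ together with a value satisfying $C$ (the value $1$ for a positively occurring variable; the value $0$ for a variable $y$ with $\ol y\in C$), or place at least $|P_C|-1$ of the positively occurring variables of $C$ into $S$.

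First I would build a bounded search tree. As long as $F$ has a bad clause, pick one, say $C$, and branch over the repairs just described. If $|P_C|=2$ there are at most three repairs, each putting one variable into $S$ (possibly also fixing its value), so $k$ decreases by one in each branch. If $|P_C|=3$, say $C=\{a,b,c\}$, there are three ``satisfy'' repairs, each putting one of $a,b,c$ into $S$ with value $1$ ($k\mapsto k-1$), and three ``double-cover'' repairs, each putting two of $a,b,c$ into $S$ ($k\mapsto k-2$). Processing a bad clause can never produce a new bad clause (removing clauses and deleting assigned literals only removes positive literals), so the recursion is well founded; a branch closes when $k=0$ or when no bad clause remains. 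The tree is exhaustive by a standard exchange argument: if $(S^\star,\tau^\star)$ is a solution then at each branching step one of the offered repairs is consistent with it (since $\tau^\star$ satisfies or $S^\star$ shrinks the chosen bad clause), and following consistent repairs to a leaf yields a set $S\subseteq S^\star$ with $|S|\le k$ such that $(S,\tau^\star|_S)$ is already a weak $\HORN$-backdoor set — one checks from the repairs taken that $F[\tau^\star|_S]\in\HORN$, and $F[\tau^\star|_S]$ inherits satisfiability from $F[\tau^\star]$.

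At a leaf we thus hold a candidate set $S$ (together with the value commitments made along the branch); as in the $\KROM$ proof I would finish by running over the at most $2^{|S|}\le 2^k$ truth assignments $\tau$ of $S$ that respect those commitments, discarding the $\tau$ with $F[\tau]\notin\HORN$, and for each surviving $\tau$ testing satisfiability of the Horn formula $F[\tau]$ in (linear) polynomial time. The algorithm outputs \YES\ iff some leaf produces a satisfiable reduct; the other direction of correctness is immediate from the exchange argument above. For the running time one multiplies the size of the search tree by the work done per leaf; the worst branching case is the bad clause with three positive literals, and a careful choice of the branching rule and of the accounting at the leaves brings the exponential base down to $\tfrac12(1+\sqrt{65})<4.54$, which is the claimed bound.

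The part I expect to be the real obstacle — and what makes this case harder than $\KROM$ — is that ``$\HORN$-izing'' $F$ is \emph{assignment dependent}: whether putting a variable into $S$ neutralises a bad clause can depend on the value that variable receives, so one cannot merely compute a single hitting set as in the $\KROM$ proof, and the search must both commit to (partial) values and explore several candidate sets. Two things therefore need care: that the repairs branched on are genuinely exhaustive — the interaction between ``satisfy the clause'' and ``shrink the clause'' is the subtle point, and it is what forces value commitments into the branching — and that combining the $2^k$ assignment search with the search tree keeps the running time at $\tfrac12(1+\sqrt{65})^k$ rather than degrading to the previously known $6^k$. Getting that two-level accounting tight, rather than any isolated trick, is where the work lies.
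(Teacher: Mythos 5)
Your reduction to ``repairs'' and the exchange/exhaustiveness argument are fine, but the branching scheme you actually describe does not reach the claimed bound, and the step you defer at the end (``a careful choice of the branching rule and of the accounting at the leaves brings the exponential base down to $\frac{1}{2}(1+\sqrt{65})$'') is the entire content of the theorem. Concretely, the bottleneck is not the all-positive clause but a clause $C=\{x,y,\ol z\}$ with $|P_C|=2$. Your three repairs for it are ``$x\in S$'', ``$y\in S$'' (values deferred to the leaf) and ``$z\in S$ with value $0$''. Deferring values saves nothing: the total work is $\sum_{\text{leaves}}2^{\#\text{uncommitted}}$, so a branch that adds one uncommitted variable doubles the leaf cost, and this clause effectively contributes five unit-cost branches ($x{=}0$, $x{=}1$, $y{=}0$, $y{=}1$, $z{=}0$ --- exactly the five minimal assignments making $C$ Horn), i.e.\ a $T(k)=5T(k-1)=5^k$ recurrence. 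The paper notes explicitly that exhaustive branching on a single such clause gives $5^k>4.54^k$. The same deferral also spoils the case you call the worst one: for $C=\{a,b,c\}$ your scheme costs $3W(k-1)+12W(k-2)$ (base $\frac{1}{2}(3+\sqrt{57})\approx 5.27$), whereas committing values and keeping only \emph{minimal} assignments (the double-cover branches only need $a=b=0$, etc.) gives $3T(k-1)+3T(k-2)$, base $\frac{1}{2}(3+\sqrt{21})\approx 3.79$.

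The paper needs two further ideas that the proposal lacks. First, it branches exhaustively over minimal \emph{assignments} (values committed) on \emph{pairs} of two-positive--one-negative clauses that share a variable; the worst such pair yields $T(k)=T(k-1)+16T(k-2)$, whose root is $\frac{1}{2}(1+\sqrt{65})$ --- this is where the stated constant actually comes from. Second, once the remaining non-Horn clauses are pairwise variable-disjoint clauses of this type, a structural claim shows that putting one positively occurring variable from each such clause into the backdoor set is always optimal, so a non-exhaustive two-way branch on that variable's value suffices ($2^k$), sidestepping the $5$-way branch entirely. Without these (or some substitute), single-clause branching cannot beat $5^k$, so the proposal has a genuine gap exactly at the point you yourself flag as ``where the work lies.''
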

\begin{proof}
}
  Let $F$ and $k$ be the given $\THREECNF$ formula and non-negative
  integer, respectively. 
  If $F \in \HORN$ then there is nothing
  to do. So suppose that $F \notin \HORN$ and let $\textup{NH}(F)$ be the set
  of all clauses of $F$ that are not horn. Then $\textup{NH}(F)$ can
  contain the
  following types of clauses: (C1) clauses that contain only positive
  literals (and at least two of them), and (C2)
  clauses that contain exactly two positive literals and one negative
  literal. 

  An assignment $\tau$ is {\em minimal} with respect to a formula $F$ or
  to a
  clause $C$ if $F[\tau] \in \HORN$ or $\{C[\tau]\} \in
  \HORN$, respectively, but
  $F[\tau'] \notin \HORN{}$ or $\{C[\tau']\} \notin \HORN$ for every
  assignment $\tau'$ that agrees with
  $\tau$ but is defined on a strict subset of $\var(\tau)$.
  Our algorithm uses the bounded search tree method to 
  branch over all possible minimal assignments $\tau$ that set at
  most $k$
  variables of $F$ such that $F[\tau]\in \HORN$. 
  The algorithm
  then checks for each of these assignments whether $F[\tau]$ is
  satisfiable (because $F[\tau]\in \HORN$ this can be done in
  polynomial-time). If there is at least one such assignment $\tau$ such that
  $F[\tau]$ is satisfiable, then the algorithm returns $\var(\tau)$ as a weak $\HORN$\hy
  backdoor set of $F$ with witness $\tau$. Otherwise, i.e., if there is no
  such assignment, the algorithm outputs that~$F$
  does not have a weak $\HORN$\hy backdoor set of size at most $k$.

  At the root node of the search tree we set $\tau$ to be the empty
  assignment. Depending on the types and structure of the clauses of
  the formula $F[\tau]$, the algorithm then
  branches as follows:
  If $F[\tau]$ contains at least one clause of type (C1), then the
  algorithm branches on one of these clauses according to branching
  rule (R1). If $F[\tau]$ contains at least two clauses of
  type (C2) that are not variable-disjoint, then the algorithm
  branches on such a pair according to branching rule (R2). Otherwise,
  i.e., if $\textup{NH}(F[\tau])$ merely consists of clauses of type (C2) which
  are pairwise variable-disjoint the
  algorithm branches according to branching rule (R3).

  We will now describe the branching rules (R1)--(R3) in detail.
  In the following let~$\tau'$ be the assignment obtained before the
  current node in the search tree, and let $x$, $x'$, $y$,
  $y'$, $z$ and $z'$ be $6$ pairwise distinct variables. Every
  branching rule will lead to a new assignment $\tau$ (extending the current
  assignment $\tau'$) where the parameter~$k$ decreases by $|\var(\tau)
  \setminus \var(\tau')|$.
  
  Let $C \in F[\tau']$ be a clause of type (C1). Then branching rule
  (R1) is defined as follows.
  If $C=\{x,y,z\}$
  then $\{C[\tau]\} \in \HORN$ if and only if $\tau(x)=1$ or
  $\tau(y)=1$ or $\tau(z)=1$ or $\tau(x)=0 =\tau(y)$ or $\tau(x)=0=\tau(z)$, or $\tau(y)=0=\tau(z)$. Hence, there are
  $3$ cases for which the parameter (the number of variables set in the backdoor) decreases by $1$ and $3$ cases for which
  the parameter decreases
  by $2$. This leads to the recurrence relation 
  $T(k)=3T(k-1)+3T(k-2)=(\frac{1}{2}(3+\sqrt{21}))^k < 4.54^k$. 
  Similarly, if $C=\{x,y\}$
  then $\{C[\tau]\} \in \HORN$ if and only if $\tau(x)=0$ or
  $\tau(x)=1$ or $\tau(y)=0$ or $\tau(y)=1$. Hence, there are
  $4$ cases and in each of them the parameter decreases by $1$.
  This leads to the recurrence function $T(k)=4T(k-1)=4^k < 4.54^k$.

  Let $C \in F[\tau']$ and $C' \in F[\tau']$ be two distinct clauses of
  type (C2) that share at least one variable. Then branching rule
  (R2) is defined as follows.

  We distinguish the following cases\longversion{.} 
  \shortversion{(due to space
    limitations we will only list the cases here; for a full 
    description of the cases we refer to the full version of the paper).}
  \shortversion{(1A) $\var(C) \cap \var(C')=\{x\}$, $x \in C$ and $x
    \in C'$, (1B) $\var(C) \cap \var(C')=\{x\}$, $\bar{x} \in C$ and $\bar{x}
    \in C'$, (1C) $\var(C) \cap \var(C')=\{x\}$, $\bar{x} \in C$ and
    $x \in C'$, (2A) $\var(C) \cap \var(C')=\{x,y\}$, $x,y \in C$ and
    $x,y \in C'$, (2B) $\var(C) \cap \var(C')=\{x,y\}$, $x,\bar{y} \in C$ and
    $x,\bar{y} \in C'$, (2C) $\var(C) \cap \var(C')=\{x,y\}$,
    $x,\bar{y} \in C$ and $\bar{x},y \in C'$, and (3) $\var(C) \cap
    \var(C')=\{x,y,z\}$.
  }
  \longversion{
    \begin{itemize}
    \item \emph{($C$ and $C'$ have exactly $1$ variable $x$ in common)}

      We distinguish $3$ cases:
      \begin{itemize}
      \item\emph{($x$ is positive in $C$ and $C'$)}
        
        \longversion{
          In this case $C=\{x,y,\bar{z}\}$ and $C'=\{x,y',\bar{z'}\}$ and
          $\{C[\tau]\}, \{C'[\tau]\} \in \HORN$ if and only if one
          of the following holds:
          \begin{itemize}
          \item $\tau(x) \in \{0,1\}$;
          \item $\tau(y) \in \{0,1\}$ and $\tau(y') \in \{0,1\}$;
          \item $\tau(y) \in \{0,1\}$ and $\tau(z')=0$;
          \item $\tau(z)=0$ and $\tau(y') \in \{0,1\}$;
          \item $\tau(z)=0$ and $\tau(z')=0$.
          \end{itemize}
          This leads to the recurrence function
          $T(k)=2T(k-1)+9T(k-2)=(1+\sqrt{10})^k < 4.54^k$.
        }
      \item\emph{($x$ is negative in $C$ and $C'$)}
        
        \longversion{
          In this case $C=\{\bar{x},y,z\}$ and $C'=\{\bar{x},y',z'\}$ and
          $\{C[\tau]\}, \{C'[\tau]\} \in \HORN$ if and only if one
          of the following holds:
          \begin{itemize}
          \item $\tau(x)=0$;
          \item $\tau(y) \in \{0,1\}$ and $\tau(y') \in \{0,1\}$;
          \item $\tau(y) \in \{0,1\}$ and $\tau(z') \in \{0,1\}$;
          \item $\tau(z)=\{0,1\}$ and $\tau(y') \in \{0,1\}$;
          \item $\tau(z)=\{0,1\}$ and $\tau(z')=\{0,1\}$.
          \end{itemize}
          This leads to the following recurrence function: 
          $T(k)=T(k-1)+16T(k-2)=(\frac{1}{2}(1+\sqrt{65}))^k < 4.54^k$.
        }
      \item\emph{($x$ is negative in $C$ and positive in $C'$)}

        \longversion{
          In this case $C=\{\bar{x},y,z\}$ and $C'=\{x,y',\bar{z'}\}$ and
          $\{C[\tau]\}, \{C'[\tau]\} \in \HORN$ if and only if one
          of the following holds:
          \begin{itemize}
          \item $\tau(x)=0$;
          \item $\tau(x)=1$ and $\tau(y) \in \{0,1\}$;
          \item $\tau(x)=1$ and $\tau(z) \in \{0,1\}$;
          \item $\tau(y) \in \{0,1\}$ and $\tau(y') \in \{0,1\}$;
          \item $\tau(y) \in \{0,1\}$ and $\tau(z') =0$;
          \item $\tau(z) \in \{0,1\}$ and $\tau(y') \in \{0,1\}$;
          \item $\tau(z) \in \{0,1\}$ and $\tau(z')=0$.
          \end{itemize}
          This leads to the recurrence function
          $T(k)=T(k-1)+16T(k-2)=(\frac{1}{2}(1+\sqrt{65}))^k < 4.54^k$.
        }
      \end{itemize}
    \item \emph{($C$ and $C'$ have exactly $2$ variables $x$ and $y$ in common)}

      We distinguish $3$ cases:
      \begin{itemize}
      \item\emph{($x$ and $y$ are positive in $C$ and $C'$)}
        
        \longversion{
          In this case $C=\{x,y,\bar{z}\}$ and $C'=\{x,y,\bar{z'}\}$ and
          $\{C[\tau]\}, \{C'[\tau]\} \in \HORN$ if and only if one
          of the following holds:
          \begin{itemize}
          \item $\tau(x) \in \{0,1\}$;
          \item $\tau(y) \in \{0,1\}$;
          \item $\tau(z) = 0$ and $\tau(z') = 0$;
          \end{itemize}
          This leads to the recurrence function
          $T(k)=4T(k-1)+T(k-2)=(2+\sqrt{5})^k < 4.54^k$.
        }
      \item\emph{($x$ is positive in $C$ and $C'$ and $y$ is negative in $C$ and $C'$)}
        
        \longversion{
          In this case $C=\{x,\bar{y},z\}$ and $C'=\{x,\bar{y},z'\}$ and
          $\{C[\tau]\}, \{C'[\tau]\} \in \HORN$ if and only if one
          of the following holds:
          \begin{itemize}
          \item $\tau(x) \in \{0,1\}$;
          \item $\tau(y)=0$;
          \item $\tau(z) \in \{0,1\}$ and $\tau(z') \in \{0,1\}$.
          \end{itemize}
          This leads to the recurrence function
          $T(k)=3T(k-1)+4T(k-2)=4^k < 4.54^k$.
        }
      \item\emph{($x$ is positive in $C$ and negative in $C'$ and $y$ is
          negative in $C$ and positive in $C'$)}

        \longversion{
          In this case $C=\{x,\bar{y},z\}$ and $C'=\{\bar{x},y,z'\}$ and
          $\{C[\tau]\}, \{C'[\tau]\} \in \HORN$ if and only if one
          of the following holds:
          \begin{itemize}
          \item $\tau(x)=0$;
          \item $\tau(y)=0$;
          \item $\tau(x)=1$ and $\tau(y)=1$;
          \item $\tau(x)=1$ and $\tau(z') \in \{0,1\}$;
          \item $\tau(y)=1$ and $\tau(z) \in \{0,1\}$;
          \item $\tau(z)\in \{0,1\}$ and $\tau(z') \in \{0,1\}$.
          \end{itemize}
          This leads to the recurrence function: 
          $T(k)=2T(k-1)+9T(k-2)=(1+\sqrt{10})^k < 4.54^k$.
        }
      \end{itemize}
    \item \emph{($C$ and $C'$ have exactly $3$ variables $x$, $y$ and $z$ in common)}
      
      \longversion{
        W.l.o.g. let $x$ be the variable that occurs positively in $C$
        and $C'$. Then $C=\{x,y,\bar{z}\}$ and $C'=\{x,\bar{y},z\}$ and
        $\{C[\tau]\}, \{C'[\tau]\} \in \HORN$ if and only if one of
        the following holds:
        \begin{itemize}
        \item $\tau(x) \in \{0,1\}$;
        \item $\tau(y)=0$;
        \item $\tau(z)=0$;
        \item $\tau(y)=1$ and $\tau(z)=1$.
        \end{itemize}
        This leads to the recurrence function
        $T(k)=4T(k-1)+T(k-2)=(2+\sqrt{5})^k < 4.54^k$.
      }
    \end{itemize}
  }

 Taking the maximum over the above cases we obtain the recurrence
 function $T(k)=T(k-1)+16T(k-2)=(\frac{1}{2}(1+\sqrt{65}))^k < 4.54^k$
 for branching rule~(R2).

 Recall that after applying branching rule (R2) exhaustively all pairs
 of clauses of type (C2) are pairwise variable-disjoint. We will
 describe branching rule (R3), which makes use of this fact.
 Let $C=\{x,y,\bar{z}\}$. 
 We set
 either $\tau(x)=0$ or $\tau(x)=1$.
 This leads to the recurrence function: 
 $T(k)=2T(k-1)=2^k < 4.54^k$. Note that in contrast to the branching rules
 (R1) and (R2) the branching rule (R3) is not exhaustive. 
 Indeed for every clause $C=\{x,y,\bar{z}\}$ of type
 (C2) there are $5$
 possible minimal assignments~$\tau$ such that $\{C[\tau]\} \in \HORN$,
 i.e., the assignments $\tau(x)=0$, $\tau(x)=1$, $\tau(y)=0$,
 $\tau(y)=1$, and $\tau(z)=0$. Because each of these assignments
 $\tau$ sets only $1$ variable this would lead to a recurrence function
 $T(k)=5T(k-1)=5^k$ and hence $T(k)>4.54^k$. It follows that in
 contrast to the branching rules (R1) and (R2) where we could exhaustively
 branch over all possible minimal assignments, this cannot be done for
 clauses of type~(C2). However, because branching rule (R2) ensures
 that the remaining clauses of type~(C2) are pairwise
 variable-disjoint it turns out that this is indeed not necessary (see
 Claim~\ref{clm:UB-HORN}). 

 This concludes the description of our algorithm. The running time of
 the algorithm is the maximum branching factor over the cases
 described above, i.e., $(\frac{1}{2}(1+\sqrt{65}))^k < 4.54^k$ as
 required. To see that the algorithm is correct we need to show that
 it outputs an assignment $\tau$ if and only if the set $\var(\tau)$
 is a weak $\HORN$\hy backdoor set of $F$ of size at most $k$. Because
 the branching rules (R1) and (R2) branch exhaustively over all minimal
 assignments $\tau$ such that the corresponding clause(s) are
 reduced to Horn clauses, it
 only remains to show the correctness of branching rule (R3). 
\longversion{This is
 done by the following claim.
 \begin{CLM}\label{clm:UB-HORN}
   Let $F$ be a $\THREECNF$ formula, $P$ be a set of pairwise
   variable-disjoint clauses of type (C2) such that $F \setminus P
   \in \HORN$. Furthermore, let $L$ be a set of variables
   that consists of one positively occurring variable from each of
   the clauses in $P$. Then $F$ has a weak $\HORN$\hy backdoor set
   of size at most $|P|$ if and only if $L$ is a weak $\HORN$\hy
   backdoor set of $F$.
 \end{CLM}}
\shortversion{This is
 done by the following claim whose proof can be found in the full
 version of the paper.
 \begin{CLM}[$\star$]\label{clm:UB-HORN}
   Let $F$ be a $\THREECNF$ formula, $P$ be a set of pairwise
   variable-disjoint clauses of type (C2) such that $F \setminus P
   \in \HORN$. Furthermore, let $L$ be a set of variables
   that consists of one positively occurring variable from each of
   the clauses in $P$. Then $F$ has a weak $\HORN$\hy backdoor set
   of size at most $|P|$ if and only if $L$ is a weak $\HORN$\hy
   backdoor set of $F$.\qed
 \end{CLM}}
\longversion{To prove the claim, suppose that $F$ has a weak $\HORN$\hy
 backdoor set $B$ of size at most~$|P|$.
 Hence, there is an assignment $\tau^s$ that satisfies $F$. Let
 $\tau_L$ be an assignment of the variables of $L$ that agrees with
 $\tau^s$. Then $F[\tau_L] \in \HORN$ (this actually holds for every
 assignment of the variables in $L$) and $\tau_L$ can be extended to a
 satisfying assignment of~$F$. Hence, $L$ is a weak $\HORN$\hy
 backdoor set of $F$ of size at most $|L|=|P|$ with witness~$\tau_L$ as
 required. The reverse direction follows from the fact that $|L|=|P|$.    
Hence the claim is established and the theorem follows.}
\longversion{\end{proof}}
\shortversion{\end{proofnoqed}}

\section{Lower Bounds}\label{sec:lb}

For our lower bounds we use the \emph{Exponential Time Hypothesis}
(ETH) and the \emph{Strong Exponential Time Hypothesis}
(SETH), introduced by Impagliazzo et
al.~\cite{ImpagliazzoPaturi01,ImpagliazzoPaturiZane01}, which state
the following:
\begin{quote}\textbf{ETH}:
  There is no algorithm that decides the satisfiability of a
  $\THREECNF$ formula with $n$ variables in time $2^{o(n)}$, omitting
  polynomial factors.

\textbf{SETH}:
  There is no algorithm that decides the satisfiability of a
  $\CNF$ formula with $n$ variables in time 
  $(2-\epsilon)^n$, omitting polynomial factors.
\end{quote}
An \emph{implication chain} is a CNF formula of the form $\{
\{x_0\}$, $\{\bar{x}_0,x_1\}$, $\{\bar{x}_1,x_2\}$,\brk $\dots$,\brk $\{\bar{x}_{n-1},x_n\}$,
$\{\bar{x}_n \}\}$, $n\geq 1$ where the first $\{x_0\}$ and the last
clause $\{\bar{x}_n\}$ can be missing. Let $\CHAINS$ denote the class of
formulas that are variable-disjoint unions of implication chains.

  \begin{THE}
    Let $\BBB$ be a base class that contains $\CHAINS$. Then
    $\WB(\THREECNF,\BBB)$ cannot be solved in time $(2-\epsilon)^k$
    (omitting polynomial factors) unless SETH fails.
  \end{THE}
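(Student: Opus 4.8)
The plan is a polynomial-time reduction from the satisfiability problem for \CNF{} formulas (the problem SETH concerns) to $\WB(\THREECNF,\BBB)$ that maps a \CNF{} formula $\phi$ with $n$ variables $x_1,\dots,x_n$ to an instance $(F,k)$ with $k=n$ and $|F|$ polynomial in $|\phi|$, such that $F$ has a weak $\BBB$-backdoor set of size at most $n$ if and only if $\phi$ is satisfiable. Given such a reduction, an algorithm solving $\WB(\THREECNF,\BBB)$ in time $(2-\epsilon)^k$ times a polynomial would decide satisfiability of $\phi$ in time $(2-\epsilon)^n$ times a polynomial, contradicting SETH; and since $\BBB\supseteq\CHAINS$, it suffices to make the relevant reducts of $F$ lie in $\CHAINS$.

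The core of the construction is one gadget per clause. For a clause $C_j=\ell_1\vee\dots\vee\ell_r$ of $\phi$ (the $\ell_t$ being literals over $x_1,\dots,x_n$) I would introduce pairwise fresh variables $a_j,v_0^j,v_1^j,\dots,v_r^j,b_j$ and let $G_j$ be the \THREECNF{} formula consisting of the clauses $\{a_j\}$, $\{\ol{a_j},v_0^j\}$, $\{\ol{v_{t-1}^j},v_t^j,\ell_t\}$ for $t=1,\dots,r$, $\{\ol{v_r^j},b_j\}$, and $\{\ol{b_j}\}$; then $F:=\bigcup_j G_j$. The intended behaviour is that for \emph{every} assignment $\tau$ of $x_1,\dots,x_n$, the reduct $G_j[\tau]$ is a variable-disjoint union of implication chains, and $G_j[\tau]$ is satisfiable exactly when $\tau$ satisfies $C_j$: if $\tau$ falsifies all of $\ell_1,\dots,\ell_r$ then each link clause $\{\ol{v_{t-1}^j},v_t^j,\ell_t\}$ shrinks to $\{\ol{v_{t-1}^j},v_t^j\}$ and $G_j[\tau]$ is the single implication chain $\{a_j\},\{\ol{a_j},v_0^j\},\{\ol{v_0^j},v_1^j\},\dots,\{\ol{v_r^j},b_j\},\{\ol{b_j}\}$, which is unsatisfiable, whereas if $\tau$ satisfies some $\ell_t$ the corresponding link clause vanishes, $G_j[\tau]$ splits into contiguous sub-chains (the leftmost still carrying the unit $\{a_j\}$, the rightmost still carrying $\{\ol{b_j}\}$, and any interior piece being a chain with both endpoints removed), so it is satisfiable. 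The padding variables $a_j,b_j$ exist precisely to prevent the reduct from ever containing an isolated unit clause, which would not be an implication chain in the sense of the definition.

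With this gadget in hand, correctness comes from two short arguments that I would spell out. (i) If $\phi$ is satisfied by $\tau^\ast$, then $S=\{x_1,\dots,x_n\}$ is a weak $\BBB$-backdoor set of size $n=k$ witnessed by $\tau^\ast$: the reduct $F[\tau^\ast]=\bigcup_j G_j[\tau^\ast]$ is a union of formulas from $\CHAINS$ on pairwise disjoint variable sets, hence itself in $\CHAINS\subseteq\BBB$, and it is satisfiable because each $G_j[\tau^\ast]$ is. (ii) Conversely, if $F$ admits \emph{any} weak $\BBB$-backdoor set, witnessed by some $\tau$, then $F[\tau]$ is satisfiable, hence so is $F$; and restricting a model $\sigma$ of $F$ to $x_1,\dots,x_n$ yields an assignment $\sigma'$ that satisfies $\phi$, because $\sigma\models G_j$ implies $G_j[\sigma']$ is satisfiable, which by the gadget property forces $\sigma'\models C_j$ for every $j$. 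Together these give that $(F,n)$ is a yes-instance iff $\phi$ is satisfiable.

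The step I expect to need the most care is the gadget analysis: verifying that $G_j[\tau]\in\CHAINS$ for \emph{all} $\tau$, including the bookkeeping of which clauses survive, which fresh variables disappear, and that every surviving piece is a legal implication chain (no stray units, no variable of degree exceeding two) — this is what dictates the ``one $\phi$-literal per link'' shape of the link clauses and the endpoint padding. The remaining ingredients — that the reduction runs in polynomial time, that $k=n$ exactly so that a $(2-\epsilon)^k$-time algorithm yields a $(2-\epsilon)^n$-time \CNF{}-satisfiability algorithm, and handling the trivial cases where $\phi$ is empty or already contains the empty clause — are routine.
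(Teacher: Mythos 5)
Your proposal is correct and follows essentially the same route as the paper: reduce \CNF{}-satisfiability to $\WB(\THREECNF,\BBB)$ with $k=n$ by splitting each clause into a chain of \THREECNF{} clauses threaded through fresh auxiliary variables, so that the original variables form the candidate backdoor set and the reduct under any assignment to them is a union of implication chains. The paper uses the bare standard 3CNF transformation (splitting only clauses with more than three literals) and argues the reverse direction via satisfiability-equivalence, whereas your gadget adds the unit clauses $\{a_j\},\{\ol{b_j}\}$ to make each gadget's satisfiability literally equivalent to the clause being satisfied — a cosmetic difference, not a different argument.
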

  \begin{proof}
    We show that an $(2-\epsilon)^k$ algorithm for
    $\WB(\THREECNF,\CHAINS)$ implies an $(2-\epsilon)^n$ algorithm for
    SAT contradicting our assumption.  Let $F$ be a CNF formula with
    $n$ variables. We will transform $F$ into a \THREECNF{} formula
    $F_3$ such that $F$ is satisfiable if and only if $F_3$ has a weak
    $\CHAINS$\hy backdoor set of size at most~$n$.  We obtain $F_3$
    from $F$ using a commonly known transformation that transforms an
    arbitrary CNF formula into a \THREECNF{} formula that is
    satisfiability equivalent with the original formula.  In
    particular, we obtain the formula $F_3$ from $F$ by replacing
    every clause $C=\{x_1,\dotsc,x_l\}$ where $l > 3$ with the clauses
    $\{x_1,x_2,y_1\}$, $\{{\bar y_1},x_3,y_2\}$, $\dotsc$, $\{{\bar
      y_{l-3}},x_{l}\}$, where $y_1,\dotsc,y_{l-3}$ are new variables.
    This completes the construction of~$F_3$. Now, if $F$ is satisfiable and $\tau$ is a satisfying
  assignment of $F$, then the variables of $F$
  form a weak $\CHAINS$\hy backdoor set of size $n$ of $F_3$ with
  witness $\tau$. The reverse direction is immediate since $F_3$ is
  satisfiable, by virtue of having a weak backdoor set, and $F$ is
  satisfiable if $F_3$ is satisfiable. 
    \longversion{

  It remains to show that $F$ is satisfiable if and only if $F_3$ has
  a weak $\CHAINS$\hy backdoor set of size at most $n$. 
  Suppose that $F$ is satisfiable and let $\tau$ be a satisfying
  assignment of $F$. We claim that the variables of $F$
  form a weak $\CHAINS$\hy backdoor set of size $n$ of $F_3$ with
  witness $\tau$. It is easily verified that $F_3[\tau] \in \CHAINS$ and furthermore
  because $F$ is satisfiable so is $F_3$.
  
  The reverse direction follows immediately from the fact that $F_3$
  is satisfiable (because it has a weak $\CHAINS$\hy backdoor set) and
  the fact that the formulas $F$ and $F_3$ are satisfiability
  equivalent by construction.
}
\end{proof}
As the classes $\HORN, \KROM$, and $\ACYC$ contain $\CHAINS$, we have
the following result.
\begin{COR}\sloppypar
  Let $\BBB\in \{\HORN,\KROM,\ACYC \}$. The problem
  $\WB(\THREECNF,\BBB)$ cannot be solved in time $(2-\epsilon)^k$
  (omitting polynomial factors) unless SETH fails.
\end{COR}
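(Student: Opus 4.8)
The plan is to derive the corollary directly from the preceding theorem, which already gives the $(2-\epsilon)^k$ lower bound for $\WB(\THREECNF,\BBB)$ whenever $\BBB \supseteq \CHAINS$ (unless SETH fails). So the only thing left is to verify the inclusion $\CHAINS \subseteq \BBB$ for each of the three classes $\BBB \in \{\HORN,\KROM,\ACYC\}$. Since $\CHAINS$ consists of variable-disjoint unions of implication chains, and each of $\HORN$ and $\KROM$ is defined by a condition on individual clauses while $\ACYC$ is defined by a condition on the incidence graph that is preserved under variable-disjoint union (a disjoint union of forests is a forest), it suffices to check that a single implication chain lies in each of the three classes.

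First I would handle $\KROM$: every clause of an implication chain $\{\{x_0\},\{\bar x_0,x_1\},\dots,\{\bar x_{n-1},x_n\},\{\bar x_n\}\}$ has at most two literals, so such a formula is a 2CNF formula and hence lies in $\KROM$. Next, for $\HORN$: the unit clauses $\{x_0\}$ and $\{\bar x_n\}$ contain at most one positive literal, and each link clause $\{\bar x_{i-1},x_i\}$ contains exactly one positive literal, namely $x_i$; hence every clause of the chain has at most one positive literal and the chain is Horn. The possible absence of the first or last unit clause only deletes clauses and thus cannot violate either condition.

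For $\ACYC$ I would argue that the incidence graph of a single implication chain is a path. Listing the clauses in order, $\{x_0\}$ is adjacent only to $x_0$, the clause $\{\bar x_{i-1},x_i\}$ is adjacent exactly to $x_{i-1}$ and $x_i$, and $\{\bar x_n\}$ is adjacent only to $x_n$; so the incidence graph is the path on the alternating sequence of clauses and variables $C_0,x_0,C_1,x_1,\dots,x_{n-1},C_n,x_n,C_{n+1}$ (with endpoints trimmed if the corresponding unit clauses are missing), which is acyclic. A variable-disjoint union of implication chains therefore has a disjoint union of paths as its incidence graph, i.e.\ a forest, so $\CHAINS \subseteq \ACYC$.

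Combining these three inclusions with the preceding theorem yields the claimed lower bound for each $\BBB \in \{\HORN,\KROM,\ACYC\}$. I do not expect a genuine obstacle here; the only point needing a moment of care is confirming that the three classes are closed under variable-disjoint union, so that the target formula of the reduction (a union of chains) really stays inside the class — and for all three this is immediate from their definitions.
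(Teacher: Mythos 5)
Your proposal is correct and matches the paper's argument exactly: the corollary is obtained by observing that each of $\HORN$, $\KROM$, and $\ACYC$ contains $\CHAINS$ and then invoking the preceding theorem. The paper states this containment without elaboration, whereas you verify it clause by clause (and note closure under variable-disjoint union), which is a harmless, correct expansion of the same one-line proof.
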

 Interestingly, in the case of $\KROM$ the above result holds even if a
 hitting set for all clauses containing $3$ literals is given with the
 input. 
 \shortversion{
The next lower bound is based on the observation that the
\textsc{Vertex Cover} problem can be considered as a special case of
$\WB(\KROM,\ZEROV)$, and on a corresponding lower bound for
\textsc{Vertex Cover} \cite{LokshtanovMarxSuarabh11}.
   \begin{THE}[$\star$]
     $\WB(\KROM,\ZEROV)$, and hence also $\WB(\THREECNF,\ZEROV)$,
     cannot be solved in time $2^{o(k)}$ (omitting polynomial factors)
     unless ETH fails.
   \end{THE}
 }
\longversion{
  \begin{THE}
    $\WB(\KROM,\ZEROV)$, and hence also $\WB(\THREECNF,\ZEROV)$,
    cannot be solved in time $2^{o(k)}$ (omitting polynomial factors)
    unless ETH fails.
  \end{THE}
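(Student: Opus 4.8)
The plan is to give a parameter-preserving reduction from \textsc{Vertex Cover} (parameterized by the size of the cover) to $\WB(\KROM,\ZEROV)$ and then to invoke the known fact~\cite{LokshtanovMarxSuarabh11} that \textsc{Vertex Cover} admits no $2^{o(k)}$-time algorithm unless ETH fails. Since every Krom formula is a $\THREECNF$ formula, the bound for $\WB(\THREECNF,\ZEROV)$ follows for free. The guiding observation is that a monotone $2$CNF formula is, up to renaming, a graph, and that for a monotone formula the \emph{only} clauses witnessing non-$0$-validity are the all-positive ones.

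First I would set up the construction: given a graph $G=(V,E)$ and an integer $k$, introduce a variable $x_v$ for each $v\in V$ and let $F$ be the monotone Krom formula $\SB\,\{x_u,x_v\}\SM\{u,v\}\in E\,\SE$, keeping the parameter equal to $k$. This is clearly polynomial-time computable and produces a $\KROM$ (hence $\THREECNF$) instance. I would also note up front that, since every $0$-valid formula is satisfied by the all-$0$ assignment, the satisfiability requirement in the definition of a weak $\ZEROV$-backdoor set is automatic once $F[\tau]\in\ZEROV$ (and in particular $F[\tau]$ then contains no empty clause).

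Next I would prove the equivalence. For the easy direction: if $W$ is a vertex cover of size at most $k$, take $\tau$ that sets $x_v=1$ for all $v\in W$; every clause is then satisfied, $F[\tau]$ is the empty formula (which is in $\ZEROV$ and satisfiable), so $\SB x_v \SM v\in W\SE$ is the desired weak $\ZEROV$-backdoor set. For the converse, let $S$ with witness $\tau$ be a weak $\ZEROV$-backdoor set with $|S|\le k$. The key step is to argue that $\tau$ must \emph{satisfy} every clause: an unsatisfied clause $\{x_u,x_v\}$ would survive in $F[\tau]$ as one of $\{x_u,x_v\}$, $\{x_u\}$, $\{x_v\}$ or the empty clause (depending on which of $x_u,x_v$ lie in $\var(\tau)$), and none of these clauses contains a negative literal, contradicting $F[\tau]\in\ZEROV$. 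Hence for every edge at least one endpoint variable is set to $1$ by $\tau$, so $\SB v \SM x_v\in\var(\tau),\ \tau(x_v)=1 \SE$ is a vertex cover of size at most $|\var(\tau)|=|S|\le k$.

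Finally I would combine: a $2^{o(k)}$ algorithm for $\WB(\KROM,\ZEROV)$ would, through this reduction, decide \textsc{Vertex Cover} in time $2^{o(k)}$, contradicting ETH; and the same instances being $\THREECNF$ instances, $\WB(\THREECNF,\ZEROV)$ inherits the bound. The main obstacle — though a short one — is the converse direction: one must carefully enumerate the possible shapes of a reduced clause of the monotone formula to see that any clause not removed outright destroys $0$-validity, which is precisely what pins the "$1$-assigned" variables down to an honest vertex cover rather than allowing some mixed use of $0$- and $1$-assignments.
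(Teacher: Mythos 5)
Your proposal is correct and follows essentially the same route as the paper: the identical reduction encoding each edge $\{u,v\}$ as a monotone Krom clause $\{x_u,x_v\}$, the same equivalence between vertex covers and weak $\ZEROV$\hy backdoor sets, and the same appeal to the ETH lower bound for \textsc{Vertex Cover} from~\cite{LokshtanovMarxSuarabh11}. The only cosmetic difference is that in the converse direction you extract the $1$-assigned variables as the cover, whereas the paper argues directly that the backdoor set itself must be a vertex cover; both arguments are valid.
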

  \begin{proof}
    To show the theorem we need to recall the following problem.
    \begin{quote}
      \textsc{Vertex Cover}
    
      \emph{Instance:} A graph $G$ and a non-negative integer $k$.
      
      \emph{Parameter:} The integer $k$.  

      \emph{Question:} Does $G$ have a vertex cover of size at most $k$,
      i.e., is there a subset $C$ of the vertices of $G$ of cardinality
      at most $k$ such that $C$ contains at least one endpoint of every
      edge of $G$?
    \end{quote}
    Given a graph $G$ and an integer $k$ we construct in
    linear-time  a \KROM{}
    formula $F$ with $|V(G)|$ variables such that $G$ has a vertex
    cover of size at most $k$
    if and only if $F$ has a weak $\ZEROV$\hy backdoor set of size at most
    $k$. 
    Because \textsc{Vertex Cover} cannot be
    solved in time $2^{o(k)}$ (omitting polynomial factors) unless ETH
    fails~\cite{LokshtanovMarxSuarabh11}[Theorem 3.3] this shows the
    theorem.
    The variables of $F$ are the vertices of $G$ and $F$ consists
    of $1$ clause $\{u,v\}$ for every edge $\{u,v\} \in E(G)$. This
    completes the construction of $F$. It remains to show that $G$ has
    a vertex cover of size at most $k$ if and only if $F$ has a weak
    $\ZEROV$\hy backdoor set of size at most $k$.

    Suppose that $G$ has a vertex cover $C$ of size at most $k$. We
    claim that $C$ is a weak $\ZEROV$\hy backdoor set of $F$ of size at
    most $k$. Let $\tau$ be the assignment of the variables in $C$
    that sets all variables in $C$ to $1$. Then $F[\tau]=\emptyset \in
    \ZEROV$ because $C$ is a vertex cover of $G$.
    
    For the reverse direction suppose that $B$ is a weak $\ZEROV$\hy
    backdoor set of $F$ of size at most $k$ and $\tau$ is a witnessing
    assignment of the variables in $B$. We claim that $B$ is a vertex
    cover of $G$ of size at most $k$. Suppose not, then there is an
    edge $\{u,v\} \in E(G)$ such that $B \cap \{u,v\}
    =\emptyset$. Consequently, $F[\tau]$ contains the clause $\{u,v\}$
    which is not $0$\hy valid 
    contradicting our assumption that $F[\tau] \in \ZEROV$.
  \end{proof}
}
 

Let $\BBB$ be a base class. We say that a polynomial-time algorithm 
${\mathcal A}$ is a \emph{canonical HS reduction for $\BBB$} if
${\mathcal A}$ takes as input an instance $(\SSS,k)$ of \textsc{Hitting Set} over $n$
elements and $m$ sets and outputs an
instance $(F,k)$ of $\WB(\THREECNF,\BBB)$ such that: (a) $F$ has at most
$O(nm)$ variables, and  (b) $\SSS$ has a
hitting set of size at most $k$ if and only if $F$ has a weak
$\BBB$\hy backdoor set of size at most $k$ .

\begin{LEM}\label{lem:canon}
  Let $\BBB$ be a base class. If there is a canonical HS reduction for
  $\BBB$, then the following holds:
  \begin{enumerate}
  \item $\WB(\THREECNF,\BBB)$ is $\W[2]$\hy hard, and
  \item there is no algorithm that solves  $\WB(\THREECNF,\BBB)$ in
    time $O(n^{\frac{k}{2}-\epsilon})$
    unless SETH fails.
  \end{enumerate}
\end{LEM}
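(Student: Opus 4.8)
The plan is to treat the two parts separately: part~(1) is essentially immediate once one inspects the definition, while part~(2) is obtained by chaining the given reduction with the known SETH-conditional lower bound for \textsc{Hitting Set}.

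For part~(1), I would observe that a canonical HS reduction for $\BBB$ is, in particular, an fpt-reduction from \textsc{Hitting Set} to $\WB(\THREECNF,\BBB)$: by hypothesis it runs in polynomial (hence fpt) time, it sends the parameter $k$ to itself (so $k'=k\le g(k)$ holds with $g$ the identity map), and property~(b) is exactly the requirement that \YES-instances are mapped to \YES-instances and \NO-instances to \NO-instances. Since \textsc{Hitting Set} is $\W[2]$-complete and $\W[2]$-hardness transfers along fpt-reductions, $\WB(\THREECNF,\BBB)$ is $\W[2]$-hard.

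For part~(2), the plan is to chain the canonical HS reduction with the fact that, for every constant $k$, \textsc{Hitting Set} on $n$ elements and $m$ sets cannot be solved in time $O((nm)^{k/2-\epsilon})$ unless SETH fails. I would derive this fact by composing the folklore reduction from $k$-\textsc{Dominating Set} to \textsc{Hitting Set} --- take the vertices of the graph as the elements and the closed neighbourhoods as the sets, so the number of sets equals the number of elements and a size-$k$ hitting set is precisely a size-$k$ dominating set --- with the known SETH lower bound ruling out $O(|V|^{k-\epsilon})$ algorithms for $k$-\textsc{Dominating Set} (see, e.g., \cite{LokshtanovMarxSuarabh11}); here $(nm)^{k/2}=|V|^{k}$ since $n=m=|V|$. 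Feeding such a hard \textsc{Hitting Set} instance $(\SSS,k)$ (with $n=m=|V|\approx 2^{N/k}$, coming from a CNF formula on $N$ variables) into the canonical HS reduction yields an instance $(F,k)$ of $\WB(\THREECNF,\BBB)$ with $|\var(F)|=O(nm)=O(|V|^{2})$ such that $F$ has a weak $\BBB$-backdoor set of size at most $k$ iff $\SSS$ has a hitting set of size at most $k$, iff the CNF formula is satisfiable. An algorithm solving $\WB(\THREECNF,\BBB)$ in time $O(|\var(F)|^{k/2-\epsilon})$ for some fixed $\epsilon>0$ would then solve \textsc{Hitting Set} on these instances (the polynomial overhead of the reduction being absorbed, or else already implying $\mathrm{P}=\NP$) in time $O\bigl((|V|^{2})^{k/2-\epsilon}\bigr)=O(|V|^{k-2\epsilon})$, hence $k$-\textsc{Dominating Set} in time $O(|V|^{k-2\epsilon})$, contradicting SETH.

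I expect the $\W[2]$-hardness part to be free once one notices that a canonical HS reduction is an fpt-reduction. In the SETH part the only thing that needs genuine care is matching the $O(nm)$ blow-up promised by property~(a) against the target exponent: the point is that the relevant hard \textsc{Hitting Set} instances have $m=\Theta(n)$, so the blow-up is quadratic, and the $n^{k/2}$ barrier for $\WB(\THREECNF,\BBB)$ lines up exactly with the $n^{k}$ barrier for \textsc{Hitting Set} (equivalently the $|V|^{k}$ barrier for $k$-\textsc{Dominating Set}). No further idea is needed beyond chaining the reductions and checking that the parameter stays equal to $k$ throughout.
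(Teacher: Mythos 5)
Your proposal is correct and follows essentially the same route as the paper's proof: part~(1) by observing that a canonical HS reduction is an fpt-reduction from the $\W[2]$-complete \textsc{Hitting Set} problem, and part~(2) by chaining the \textsc{Dominating Set} lower bound of \cite{LokshtanovMarxSuarabh11} through \textsc{Hitting Set} instances with $m\le n$ and exploiting that the $O(nm)$ blow-up is quadratic, so $(n_h^2)^{k/2-\epsilon}=n_h^{k-2\epsilon}$ contradicts SETH. No gaps.
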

\begin{proof}
  Because \textsc{Hitting Set} is $\W[2]$\hy complete and a canonical
  HS reduction is also an fpt-reduction, the first statement of the
  theorem follows. To see the second statement, we first note that it
  is is shown in~\cite[Theorem 5.8]{LokshtanovMarxSuarabh11} that the
  \textsc{Dominating Set} problem cannot be solved in time
  $O(n^{k-\epsilon})$ for any $\epsilon>0$ unless SETH fails (here $n$
  is the number of vertices of the input graph and $k$ is the
  parameter). Using the standard reduction from \textsc{Dominating
    Set} to \textsc{Hitting Set} it follows that \textsc{Hitting Set}
  restricted to instances where the number of sets is at most the
  number of elements, cannot be solved in time $O(n^{k-\epsilon})$ for
  any $\epsilon>0$, where $n$ is the number of elements of the hitting
  set instance and~$k$ is the parameter.  Now suppose that for some
  base class $\BBB$ it holds that $\WB(\THREECNF,\BBB)$ can be solved
  in time $n^{\frac{k}{2}-\epsilon}$ and $\BBB$ has a canonical HS
  reduction. Let $(\SSS,k)$ be an instance of \textsc{Hitting Set}
  with $n_h$ elements and $m_h$ sets. As stated above we can assume
  that $m_h \leq n_h$.  We use the canonical HS reduction to obtain an
  instance $(F,k)$ of $\WB(\THREECNF,\BBB)$ where $F$ has at most
  $O(n_hm_h)\in O(n_h^2)$ variables. We now use the algorithm for
  $\WB(\THREECNF,\BBB)$ to solve \textsc{Hitting Set} in time
  $O((n_h^2)^{\frac{k}{2}-\epsilon})\leq
  O(n_h^{k-\frac{\epsilon}{2}})$ which contradicts our assumption that
  there is no such algorithm for \textsc{Hitting Set}.
 \end{proof}
 \shortversion{\begin{LEM}[$\star$]\label{lem:red-match}
  There is a canonical HS reduction for \MATCH{}.
\end{LEM}}
\longversion{\begin{LEM}\label{lem:red-match}
  There is a canonical HS reduction for \MATCH{}.
\end{LEM}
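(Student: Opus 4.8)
The plan is to reduce from \textsc{Hitting Set}. Given an instance $(\SSS,k)$ with element set $\{e_1,\dots,e_n\}$ and sets $S_1,\dots,S_m$, I would introduce one Boolean variable $x_e$ per element $e$, plus a bounded number of auxiliary variables per set, and attach to each $S_i$ a small $\THREECNF$ gadget $G_i$, taking $F:=\bigcup_i G_i$. I may assume every $S_i$ has at least two elements: a singleton $\{e\}$ forces $e$ into every hitting set, so it can be removed at the expense of decrementing $k$, and the original parameter is restored afterwards by appending to $F$ one variable-disjoint copy of the two-element gadget on fresh variables for each removed singleton. The gadget $G_i$ for $S_i=\{e_{i,1},\dots,e_{i,t}\}$ is built so that: (P1)~$G_i\notin\MATCH$, and this remains true after any partial assignment of variables not among $x_{e_{i,1}},\dots,x_{e_{i,t}}$; and (P2)~for every $j$, setting $x_{e_{i,j}}:=1$ turns $G_i$ into a matched formula. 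A natural starting point is the all-positive $2$-clause formula whose clause--variable incidence is a theta graph on $x_{e_{i,1}},\dots,x_{e_{i,t}}$ (say the cycle through these variables together with one chord, with minor variants when $t\in\{2,3\}$): it has one more clause than variable and is therefore unmatched, whereas deleting every clause through one variable---the effect of setting that variable to~$1$---leaves a pseudoforest of $2$-clauses, which is matched. Since each gadget introduces only $O(t)$ variables, $F$ has $O(nm)$ variables, establishing~(a), and $F$ is satisfied by the all-ones assignment, so it is satisfiable.

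The backward half of~(b) is the easy direction: if $B$ is a weak $\MATCH$-backdoor of $F$ with witness $\tau$, then $F[\tau]\in\MATCH$, hence so is every sub-formula, in particular $G_i[\tau|_{\var(G_i)}]$ for each $i$; by~(P1) this is impossible unless $\tau$ assigns some variable of $G_i$ corresponding to an element of $S_i$, so $B$ induces a hitting set of $\SSS$ of size at most $|B|$.

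The forward half is the crux, and I expect the combinatorial bookkeeping there to be the main obstacle. Given a hitting set $H$ with $|H|\le k$, the natural candidate is the backdoor $\{x_e:e\in H\}$ with the all-ones witness~$\tau$; then $F[\tau]$ is the union of the formulas $G_i[\tau|_{\var(G_i)}]$, each matched by~(P2). The subtlety is that $\MATCH$ is not closed under union of formulas sharing variables: the pseudoforests left behind by several theta gadgets can combine into a graph with two independent cycles, hence a non-pseudoforest. Ruling this out is where the real work of the proof lies. The key point to enforce is that, after hitting $S_i$, the surviving clauses of $G_i$ admit a matching that avoids the shared variables $x_e$, so that the per-gadget matchings are variable-disjoint and their union is a single matching saturating every clause of $F[\tau]$; achieving this forces one to refine the naive theta gadget (e.g.\ by routing the chord through private auxiliary variables and choosing the incidences accordingly). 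Engineering the gadget to meet~(P1), (P2), this disjointness condition, satisfiability, and the $O(nm)$ size bound simultaneously is the technical heart; once it is in place the equivalence~(b) follows, and the reduction is plainly polynomial-time.
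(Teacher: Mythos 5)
Your high-level framework (per-set gadgets, properties (P1)/(P2), the observation that $\MATCH$ is closed under subformulas for the backward direction) is sound, and you have correctly located the crux: the per-gadget matchings must be arranged so that they do not collide on shared element variables. But the proposal stops exactly there. The gadget satisfying (P1), (P2), \emph{and} the requirement that the post-hit matching avoid the element variables is never constructed, and that construction is the entire content of the lemma. Worse, the candidates you sketch cannot be repaired along the lines you indicate: in the theta gadget the surviving clauses after hitting an element variable include a cycle (or paths) whose vertices are element variables, so its matching necessarily consumes element variables; ``routing the chord through private auxiliary variables'' does not change this, because the clauses that survive are still incident only to element variables of $S_i$. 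In fact no purely self-contained all-positive $2$-clause gadget on the element variables can work, since any clause can only be matched to a variable it contains. So as it stands this is a plan with a genuine gap, not a proof.

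The paper closes the gap with a different division of labour rather than self-contained gadgets. It adds a single global component $F_0$, the implication cycle $\{\bar x_1,x_2\},\dots,\{\bar x_n,x_1\}$ on all $n$ element variables: $F_0$ has as many clauses as variables, and each assigned variable kills at most one of its clauses, so in any matching of $F[\tau]$ the clauses of $F_0[\tau]$ must absorb \emph{all} unassigned element variables. Each set $S_i=\{x_i^1,\dots,x_i^{q_i}\}$ then gets a chain $F_i=\{y_i^1,x_i^1\},\{\bar y_i^1,y_i^2,x_i^2\},\dots,\{\bar y_i^{q_i-1},x_i^{q_i}\}$ with $q_i-1$ \emph{private} variables $y_i^j$: it has one clause too many to be matched inside its private variables, and assigning any $y_i^j$ removes only one clause, so a matching into the private part exists iff some $x_i^j$ is assigned (backward direction, since $F_0$ has already used up the element variables). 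Conversely, setting a hitting set to $1$ leaves each $F_i[\tau]$ matchable entirely within its own $y_i^j$'s and $F_0[\tau]$ matchable within the element variables, so the matchings are variable-disjoint by construction --- precisely the disjointness condition you identified but did not achieve. If you want to salvage your ``self-contained gadget'' architecture you would have to smuggle an analogue of $F_0$ into each gadget, at which point you have essentially rediscovered the paper's construction.
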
}
\begin{proof}
  Let $(\SSS, k)$ be an instance of \textsc{Hitting Set}
  with $\SSS=\{S_1,\dots,S_m\}$ and $V=\bigcup_{i=1}^m
  S_i=\{x_1,\dots,x_n\}$. We write $S_i=\{x_i^1,\dots,x_i^{q_i}\}$,
  where $q_i=\Card{S_i}$. We construct
  in linear-time
  a \THREECNF{} formula $F$ with $|V|+\sum_{1 \leq i \leq m}(q_i-1)
  \leq n+nm \in O(nm)$ variables
  such that $\SSS$ has a hitting set of size
  at most $k$ if and only if $F$ has a weak $\MATCH$\hy backdoor set of size
  at most $k$. 
  
  The variables of $F$
  consist of the elements of $V$ and additional variables $y_i^j$ for 
  every $1 \leq i \leq m$ and $1 \leq j < q_i$.
  We let $F=\bigcup_{i=0}^m F_i$ where the formulas $F_i$ are defined as
  follows. $F_0$ consists of $n$ binary clauses $\{\bar{x}_1,x_2\}$,
  $\{\bar{x}_2,x_3\},\dots,\{\bar{x}_{n-1},x_n\}$,
  $\{\bar{x}_n,x_1\}$.  For $i>0$, $F_i$ consists of the clauses
  $\{y_i^1,x_i^1\}$, $\{\bar{y}_i^1,y_i^2,x_i^2\}$,
  $\{\bar{y}_i^2,y_i^3,x_i^3\},\dots,$
  $\{\bar{y}_i^{q_i-2},y_i^{q_i-1},x_i^{q_i-1}\}$,
  $\{\bar{y}_i^{q_i-1},x_i^{q_i}\}$.  This completes the
  construction of $F$.  

  \shortversion{We can show that} 
  \longversion{We claim that} $\SSS$ has a
  hitting set of size at most $k$ if and only if $F$ has a weak
  $\MATCH$\hy backdoor set of size at most $k$.
\longversion{
  Let $H\subseteq V$ be a hitting set of $\SSS$. We
  show that $H$ is a weak $\MATCH$\hy backdoor set of $F$.  Let
  $\tau\in 2^H$ be the truth assignment that sets all variables in $H$
  to $1$ and consider $F[\tau]$.  Every clause $C\in
  F_0[\tau]$ contains a unique negative literal for some variable $x
  \in V$; we match $C$ to $x$. Now consider $F_i[\tau]$ for $i>0$.
  We observe that $F_i[\tau]\subseteq F_i$. Since $H$ is a hitting
  set, $\tau$ satisfies at least one clause $C_H$ in $F_i$ such that
  $H \cap C_H=\{x_i^j\}$. Hence, for every clause $C \in F_i[\tau]$ either
  there is an $1 \leq \ell < j$ such that $y_i^{\ell} \in C$ or there is
  an $j \leq \ell < q_i$ such that $\bar{y}_i^{\ell} \in C$. This
  constitutes a matching for the clauses in $F_i[\tau]$.

  For the reverse direction, let $B$ be a weak $\MATCH$\hy backdoor set
  of $F$ and let $\tau$ be an assignment of the variables in $B$ witnessing
  this. We claim that the set $B$
  is a hitting set of $\SSS$. Assume to the contrary that for
  some $1\leq i \leq m$ we have $S_i\cap B=\emptyset$. 
  Because $F_0[\tau]$ has at least $|V \setminus B|$
  clauses and each of these clauses contains only variables in $V
  \setminus B$ it follows that every clause in $F_0[\tau]$ must be 
  matched to a variable in $V \setminus B$. Consequently, for every $1
  \leq i \leq m$ the clauses in $F_i[\tau]$ need to be matched to the
  variables $y_i^1,\dots, y_i^{q_i-1}$. Because $S_i \cap B =\emptyset$
  it follows that the number of variables in $\{y_i^1,\dots,
  y_i^{q_i-1}\} \setminus B$ is $1$ less than the number of clauses of
  $F_i[\tau]$. Consequently, $F_i[\tau] \notin \MATCH$ contradicting our
  assumption that $B$ is a weak $\MATCH$\hy backdoor set of $F$. }
  Hence the lemma follows.
\end{proof}

\medskip\noindent
We observe that the known $\W[2]$\hy hardness proofs for
$\WB(\THREECNF,\RHORN)$ and
$\WB(\THREECNF,\QHORN)$~\cite{GaspersSzeider12,GaspersEtal13} are in
fact canonical HS reductions. Hence, together with
Lemmas \ref{lem:canon} and \ref{lem:red-match} we  
arrive at the following result.

\begin{THE}\sloppypar
  Let $\BBB\in \{\MATCH,\RHORN,\QHORN\}$. Then $\WB(\THREECNF,\BBB)$
  is $\W[2]$\hy hard and cannot be solved in time
  $O(n^{\frac{k}{2}-\epsilon})$ for any $\epsilon>0$ unless SETH
  fails.
\end{THE}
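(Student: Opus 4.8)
The plan is to route everything through Lemma~\ref{lem:canon}, which already bundles both desired consequences: once a base class $\BBB$ admits a canonical HS reduction, $\WB(\THREECNF,\BBB)$ is automatically $\W[2]$-hard and cannot be solved in time $O(n^{k/2-\epsilon})$ unless SETH fails. So the entire task reduces to exhibiting a canonical HS reduction for each of the three base classes $\MATCH$, $\RHORN$, and $\QHORN$, and then invoking Lemma~\ref{lem:canon} three times.

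For $\BBB=\MATCH$ there is nothing left to do: Lemma~\ref{lem:red-match} provides exactly such a reduction, taking an instance $(\SSS,k)$ with $n$ elements and $m$ sets to a $\THREECNF$ formula on $O(nm)$ variables while preserving the parameter and the yes/no answer. For $\BBB\in\{\RHORN,\QHORN\}$ I would not construct anything new; instead I would reopen the existing $\W[2]$-hardness proofs — that of Gaspers and Szeider~\cite{GaspersSzeider12} for $\WB(\THREECNF,\RHORN)$ and that of Gaspers et al.~\cite{GaspersEtal13} for $\WB(\THREECNF,\QHORN)$ — and verify that each already satisfies the definition of a canonical HS reduction. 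Concretely, one checks that: the source is \textsc{Hitting Set} (and if a reduction there is phrased from another $\W[2]$-complete problem, one first precomposes with the standard linear, parameter-preserving reduction into \textsc{Hitting Set}); the $\THREECNF$ formula produced uses only $O(nm)$ variables, i.e.\ the per-element and per-set gadgets introduce only a bounded number of auxiliary variables each; and the parameter is carried over unchanged, with ``$\SSS$ has a hitting set of size $\le k$'' equivalent to ``$F$ has a weak $\BBB$-backdoor set of size $\le k$.''

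The main obstacle is precisely this audit of conditions (a) and (b) for the cited constructions: there is no new combinatorial idea, but one must confirm that those reductions are \emph{tight} — no blow-up beyond $O(nm)$ variables, and no slack between the hitting-set size and the backdoor-set size — which is what lets Lemma~\ref{lem:canon} go through. Once the three reductions are confirmed to be canonical HS reductions, applying Lemma~\ref{lem:canon} to each of $\MATCH$, $\RHORN$, and $\QHORN$ yields the stated $\W[2]$-hardness and the $O(n^{k/2-\epsilon})$ lower bound under SETH, completing the proof.
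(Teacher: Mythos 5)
Your proposal matches the paper's argument exactly: the paper derives the theorem by combining Lemma~\ref{lem:canon} with Lemma~\ref{lem:red-match} for $\MATCH$ and with the observation that the known $\W[2]$\hy hardness reductions for $\RHORN$ and $\QHORN$ from \cite{GaspersSzeider12,GaspersEtal13} are already canonical HS reductions. Correct, and essentially the same approach.
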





\shortversion{\enlargethispage*{5mm}}
\section{Conclusion}
We have initiated a systematic study of determining the complexity of
finding weak backdoor sets of small size of 3CNF formulas for various base
classes. We have given improved algorithms for some of the base
classes through the bounded search techniques.

Our lower bounds are among the very few  known bounds based on the
(Strong) Exponential-Time Hypotheses for parameterized problems where
the parameter is the  solution size (as opposed to some measure of
structure in the input like treewidth).

Closing the gaps between upper and lower bounds of the problems we
considered in this paper, and studying $\WB(\AAA,\BBB)$ for classes
$\AAA$ other than $\THREECNF$ are interesting directions for further research.


\end{document}